\newcommand{\ket}[1]{\left| #1 \right\rangle}
\newcommand{\bra}[1]{\left\langle #1 \right|}
\newcommand{\proj}[1]{\ket{#1}\!\bra{#1}}
\newcommand{\ev}[1]{\left\langle #1\right\rangle}
\DeclareMathOperator{\trace}{tr}
\DeclareMathOperator{\identity}{\mathbb{I}}
\newtheorem{thm}{Theorem}
\newtheorem{prob}{Open Problem}
\newtheorem{prop}[thm]{Proposition}
\theoremstyle{definition}
\newtheorem{dfn}{Definition}
\newcommand{\rodrigo}[1]{{\color{black} #1}}
\begin{document}

\title{Nonlocality in sequential correlation scenarios}

\author{Rodrigo Gallego}
\email[Electronic address:]{rgallego@zedat.fu-berlin.de}
\affiliation{Dahlem Center for Complex Quantum Systems, Freie Universität Berlin, 14195 Berlin, Germany}

\author{Lars Erik W\"urflinger}
\email[Electronic address:]{lars.wurflinger@icfo.es}
\affiliation{ICFO-Institut de Ciencies Fotoniques, Av. Carl Friedrich Gauss 3, E-08860 Castelldefels, Barcelona, Spain}

\author{Rafael Chaves}
\affiliation{Institute for Physics, University of Freiburg, Rheinstrasse 10, D-79104 Freiburg, Germany}

\author{Antonio Acín}
\affiliation{ICFO-Institut de Ciencies Fotoniques, Av. Carl Friedrich Gauss 3, E-08860 Castelldefels, Barcelona, Spain}
\affiliation{ICREA–Institució Catalana de Recerca i Estudis Avançats, Lluis Companys 23, 08010 Barcelona, Spain}

\author{Miguel Navascués}
\affiliation{H.H. Wills Physics Laboratory, University of Bristol,Tyndall Avenue, Bristol, BS8 1TL, United Kingdom}
\date{\today}

\begin{abstract}
As first shown by Popescu [S. Popescu, Phys. Rev. Lett. {\textbf{74}}, 2619 (1995)], some quantum states only reveal their nonlocality when subjected to a sequence of measurements while giving rise to local correlations in standard Bell tests. Motivated by this manifestation of ``hidden nonlocality" we set out to develop a general framework for the study of nonlocality when sequences of measurements are performed. Similar to [R. Gallego et al., Phys. Rev. Lett. {\textbf{109}}, 070401 (2013)] our approach is operational, i.e. the task is to identify the set of allowed operations in sequential correlation scenarios and define nonlocality as the resource that cannot be created by these operations. This leads to a characterisation of sequential nonlocality that contains as particular cases standard nonlocality and hidden nonlocality.
\end{abstract}

\pacs{}
\maketitle
\section{Introduction}
One of the major problems in physics is to characterise
the different correlations that arise between distant observers
performing measurements on physical systems. The set of valid
correlations depends strongly on the theory that one uses to
describe such systems. Indeed,  such correlations allow one to
distinguish in an operational way classical theory, quantum theory
and more general probabilistic theories. For instance, since the
work of Bell \cite{Bell1964}, it has been known and widely studied
that some correlations obtained from measurements on quantum
systems cannot be simulated by any local and classical theory
(local hidden-variable models). This phenomenon is referred to as
nonlocality. It can be seen in an extremely simple scenario
consisting of two distant parties performing one out of two
possible measurements in each round of the experiment. Extensions
to more than two parties have also been studied giving rise to the
notion of multipartite nonlocality. Remarkably, nonlocality has
also been shown to be useful to perform quantum information tasks
without a classical analogue, such as quantum key distribution
\cite{Acin2007, Barrett2005, Acin2006} and random number generation
\cite{Pironio2010, Colbeck2011}.

In this work, we study a new form of correlations: those that
arise between distant observers performing a sequence of
measurements on their physical systems. Naively, one may think
that the sequence of measurements can be cast as an effective
unique measurement and that the scenario is not essentially
different from the standard one. However, we show that this
scenario is in many ways richer than the one with single
measurements per round. This is already implied by the results of
\cite{Popescu1995} where it was shown that some quantum states
display only local correlations in scenarios with single
projective measurements per round, but give rise to nonlocal
correlations when a sequence of measurements is performed instead.
This phenomenon was termed ``hidden nonlocality". Our main goal is to provide a general framework for the
study of correlations that arise from a sequence of measurements
on classical and quantum systems. Already in the classical case,
we show that different sets of correlations arise and hidden
nonlocality is only the tip of the iceberg. First, we study
classical hidden-variable models from the point of view of a
resource theory. We establish a set of operations that one can
perform on classical systems that do not create nonlocality
between distant observers. This set of operations is determined by
the causal structure established given by the sequence of
measurements performed by the observers. Relying on this
operational framework, we are able to define a notion of
``sequential nonlocality" that contains as  particular cases
standard nonlocality and hidden nonlocality. We further
investigate a type of hidden-variable models specially suited for
scenarios with sequential measurements, in the spirit of
time-ordered local models considered in
\cite{Gallego2011,Gallego2012}. These different notions of
classicality for the scenario with sequential measurements are
also generalized to the quantum case. This analysis leaves
numerous open problems that we enunciate and motivate for further
study.

\section{Sequential correlation scenarios}
In the study of correlation scenarios one typically assumes that
for every physical system, prepared by a common source, each party
chooses one measurement to perform and records the corresponding
result before the source generates a new physical system for the
next run of the correlation experiment. The data collected in this
way allows the parties when they come together after sufficiently
many runs to calculate the joint probabilities. This work studies
a generalisation of this situation where the parties are allowed
to perform a sequence of measurements on their part of the system
in every run of the experiment.
\subsection{General notation}
In general, these scenarios can be formulated for an arbitrary number of parties; for simplicity, however, we will restrict our study to the case of two parties. As in a usual correlation scenario with a single measurement per party in each round, a common source produces a bipartite physical system and sends one subsystem to $A$ and the other to $B$. In contrast to the standard situation, each party has now not only one set of possible measurement settings but one set of possible settings for each measurement of the sequence of measurements it is going to perform in each run of the experiment. To keep the notation simple let us start with the case of a sequence of two measurements for each party where we label the $i$-th measurement setting and the $i$-th outcome with $x_i,a_i$ and $y_i,b_i$ for $A$ and $B$ (figure \ref{fig:scenario}).
\begin{figure}[tbp]
\begin{center}
\input{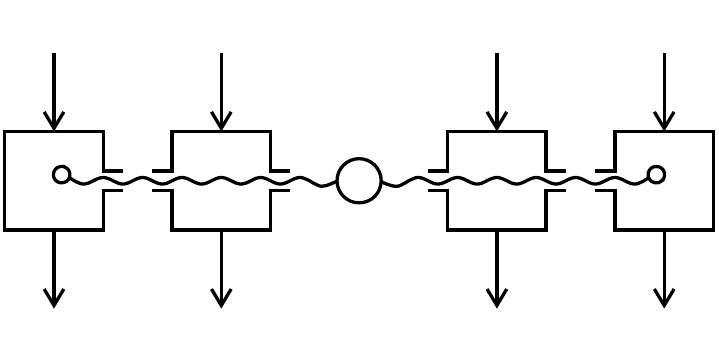_t}
\end{center}
\caption{Sequential correlation scenario in the bipartite case. A common source prepares a physical system and each of the two parties receives a subsystem. The parties $A$ and $B$ choose the settings $x_1$ and $y_1$ for their first measurement respectively and obtain the outcomes $a_1$ and $b_1$; after recording the outputs of the first measurement they choose the measurements $x_2$ and $y_2$ yielding outcomes $a_2$ and $b_2$. The experiment is described by the collection of the joint probabilities $P(a_1a_2b_1b_2|x_1x_2y_1y_2)$.}
\label{fig:scenario}
\end{figure}
Then, the
observed correlations are the collection of joint probabilities
\begin{equation}
P(a_1 a_2 b_1 b_2 |x_1 x_2 y_1 y_2).
\end{equation}

Clearly, the outcome of the first measurement cannot depend on later measurement choices; but in the present scenario of sequential measurements later outcomes may well depend on the settings and outcomes of previous measurements. Therefore, the correlations $P(a_1 a_2 b_1 b_2| x_1 x_2 y_1 y_2)$ should not be viewed as four-partite nonsignalling correlations, but rather as a bipartite distribution, where no-signalling holds with respect to $A$ versus $B$ but where signalling from the first measurement of each party to the second of the same party is allowed.

Formally, the no-signalling condition between $A$ and $B$ means that a correlation $P$ that was obtained from a sequence of two measurements for each party obeys
\begin{align}
\label{eq:NScondition1}
 &\sum_{b_1,b_2} P(a_1a_2b_1 b_2 | x_1 x_2 y_1 y_2)  &&\text{is independent of } y_1,y_2,\\
\label{eq:NScondition2}
 &\sum_{a_1,a_2 } P(a_1a_2b_1 b_2 | x_1 x_2 y_1 y_2) && \text{is independent of } x_1,x_2,
\end{align}
which guarantees that the marginal distributions for $A$ and $B$
\begin{align}
P_A(a_1a_2|x_1x_2) &= \sum_{b_1,b_2}P(a_1a_2b_1 b_2 | x_1 x_2 y_1 y_2)\\
P_B(b_1b_2|y_1y_2) &= \sum_{a_1,a_2}P(a_1a_2b_1 b_2 | x_1 x_2 y_1 y_2)
\end{align}
are well defined. Furthermore, as later measurements cannot influence the outcome of previous ones, the correlations further have to fulfil
\begin{align}
\label{eq:NScondition3}
&\sum_{a_2}P(a_1a_2b_1b_2|x_1x_2y_1y_2)&& \text{independent of } x_2\\
\label{eq:NScondition4}
&\sum_{b_2}P(a_1a_2b_1b_2|x_1x_2y_1y_2)&& \text{independent of } y_2.
\end{align}

Correlations fulfilling the above conditions are the objects of interest in the study of sequential correlation scenarios. This notion can straightforwardly be generalised to the case of longer sequences of measurements.

\begin{dfn}[Sequential correlations]
\label{dfn:seqscenario2}
For two parties $A$ and $B$ that perform a sequence of $s$ and $t$ measurements respectively, let $(a_j$, $x_j)$ denote the $j$-th outcome and setting of $A$ for $1\leq j\leq s$ and $(b_j$, $y_j)$ the $j$-th outcome and setting of $B$ for $1\leq j\leq t$. The correlations, given by the collection of the joint probabilities
\begin{equation}
P(a_1\ldots a_{s}b_1\ldots b_{t}|x_1\ldots x_{s}y_1\ldots y_{t}),
\end{equation}
are said to be \emph{sequential with respect to $(s, t)$}, if
\begin{align}
\label{eq:seqconditionsa}
\sum_{a_j,\ldots, a_{s}}P(a_1\ldots a_{s}b_1\ldots b_{t}|x_1\ldots x_{s}y_1\ldots y_{t})
\end{align}
is independent of $(x_j,\ldots, x_s)$ for all $1\leq j \leq s$ and
\begin{align}
\label{eq:seqconditionsb}
\sum_{b_j,\ldots, b_{t}}P(a_1\ldots a_{s}b_1\ldots b_{t}|x_1\ldots x_{s}y_1\ldots y_{t})
\end{align}
is independent of $(y_j,\ldots, y_t)$ for all $1\leq j \leq t$.
\end{dfn}
\rodrigo{Notice that conditions \eqref{eq:seqconditionsa} and \eqref{eq:seqconditionsb}, when taking $j=1$ just express the no-signalling condition between parties $A$ and $B$. That is, the marginal statistics of $A$ ($B$) do not depend on the inputs chosen by $B$ ($A$). The remaining conditions obtained for $1<j\leq s$ in \eqref{eq:seqconditionsa} --or $1<j\leq t$ in \eqref{eq:seqconditionsb}-- capture the temporal ordering between the sequence of measurements chosen by $A$ and $B$. For instance, taking $j=2$ in \eqref{eq:seqconditionsa} we arrive at the condition that $P(a_1,b_1,\ldots,b_t|x_1,\ldots,x_s,y_1,\ldots,y_t)$ is independent of $x_2,...,x_s$. This must be the case as the inputs $x_2,...,x_s$ should not influence neither $B$ nor the previous event $(a_1,x_1)$ in $A$.}


Ignoring the length of the input and output alphabets for each measurement, a sequential correlation scenario is then characterised by $(s,t)$ specifying the length of the measurement sequence for each party. 


\subsection{Known notions of nonlocality}

Before we turn to the study of nonlocality in scenarios with sequential measurements let us review  known notions of nonlocality, namely standard Bell nonlocality \cite{Bell1964} and `hidden nonlocality' first considered by Popescu \cite{Popescu1995}.


The standard notion of locality is due to Bell, who showed that correlations arising from local measurements on a bipartite quantum system cannot always be explained by a local hidden-variable model \cite{Bell1964}. More precisely, consider the situation of two parties $A$ and $B$ performing local measurements on their share of a bipartite system. Each system can just be seen as a black box that accepts a classical input, $x$ for $A$ and $y$ for $B$, and produces a classical output, $a$ and $b$ respectively. The correlations between the input and output processes among the two parties are then described by the joint conditional probability distribution $P(ab|xy)$. The no-signalling principle for this case states that
\begin{equation}
 \label{eq:nscond}
\begin{aligned}
 \sum_b P(ab|xy) &\quad \text{is independent of } y\\
\sum_a P(ab|xy) &\quad \text{is independent of } x.
\end{aligned}
\end{equation}

Correlations are said to admit a local hidden-variable model if the probability distribution can be decomposed as
\begin{equation}
 \label{eq:bellLocal}
P(ab|xy) = \sum_\lambda p_\lambda P_A^\lambda (a| x ) P_B^\lambda (b|y ),
\end{equation}
where $p_\lambda$ is a probability distribution of the hidden variable $\lambda$ and $P_A^\lambda, P_B^\lambda$ are probability distributions of the outcomes $a,b$ given the inputs $x,y$ respectively. 
\rodrigo{There exist different equivalent formulations of the set of assumptions leading to \eqref{eq:bellLocal}. For instance, the assumption of local-realism --$P^{\lambda}(ab|xy)=P_A^\lambda (a| x ) P_B^\lambda (b|y )$-- together with the so-called assumption of measurement independence --stating that $p_{\lambda}(xy)=p_{\lambda}$-- lead straightforwardly to \eqref{eq:bellLocal}. For sake of conciseness we incur in an abuse of terminology denoting correlations of the form \eqref{eq:bellLocal} simply as local. Equivalently, correlations that cannot be written as \eqref{eq:bellLocal} will be termed nonlocal.}

The set of all probability distributions that fulfil \eqref{eq:nscond} and \eqref{eq:bellLocal} define the convex polytope of local correlations \cite{Pitowsky1989}. Any no-signalling probability distribution that does not admit a local hidden-variable model violates at least one of the facet defining inequalities of the local polytope; these inequalities are called Bell inequalities and it is known that every pure entangled quantum state violates a Bell inequality \cite{Gisin1991}.


While every pure entangled state violates some Bell inequality, there are entangled mixed states that do not violate any Bell inequality as shown by Werner \cite{Werner1989}.
He introduced a class of bipartite mixed states that, for a certain parameter region, are entangled but notwithstanding admit a local hidden-variable model for all possible projective measurements. These \emph{Werner states} $W$ act on $\mathbb{C}^d \otimes \mathbb{C}^d$ and are of the form
\begin{equation}
 \label{eq:WernerStates}
W = p \frac{\identity + \mathbb{F}}{d(d+1)} + (1-p) \frac{\identity -\mathbb{F} }{d(d-1)},
\end{equation}
where $\identity$ denotes the identity matrix on the $d\times d$ dimensional Hilbert space, $\mathbb{F}=\sum_{ij}\ket{i}\bra{j}\otimes \ket{j}\bra{i}$ the flip operator, and $0\leq p \leq 1$. For $p=\frac{1+d}{2d^2}$ these states are entangled but do not violate any standard Bell inequality. Werner proved this by constructing an explicit local hidden-variable model that reproduces the correlations of these states for arbitrary projective measurements \citep{Werner1989}.

However, as noted by Popescu in \cite{Popescu1995}, a state like \eqref{eq:WernerStates} can give rise to correlations that are incompatible with an explanation by local hidden-variables if it is subjected to a sequence of measurements, thereby revealing what he named ``hidden'' nonlocality. To see how this argument works, suppose that the system is subjected to a sequence of two projective measurements for each party. First, each party performs a measurement that corresponds to the projection of that party's subsystem onto a two-dimensional subspace (or its orthogonal complement), i.e. $A$ performs the measurement given by the projections $\lbrace P,\identity_d -P\rbrace$ and $B$ the measurement given by the projections $\lbrace Q,\identity_d-Q\rbrace$, where
\begin{align}
 P&= \proj{1}_A+\proj{2}_A\\
 Q&= \proj{1}_B+\proj{2}_B.
\end{align}
Now, after recording the outcome of the first measurements the parties choose their measurement settings for the second round of measurements. When the parties obtained the outcomes corresponding to the projections $P$ and $Q$ respectively in the first round, the post-measurement state is given by
\begin{align}
 W' &= \frac{P\otimes Q W P \otimes Q}{\trace (P\otimes Q W)}\\
&=\frac{d}{d +2 }\left(\frac{1}{2d}\identity_4 + \proj{\Psi^-}\right),
\end{align}
where
\begin{equation}
\ket{\Psi^-}=\frac{1}{\sqrt{2}}(\ket{1} \ket{2}-\ket{2} \ket{1})
\end{equation}
the singlet state. If the parties now choose observables
$A_0,A_1,B_0,B_1$ for $A$ and $B$ respectively that give the
maximal value of the Clauser-Horne-Shimony-Holt (CHSH) expression
\cite{Clauser1969} for the singlet state, they obtain the
following value
\begin{align}
\label{eq:chshviol}
 \beta =\trace(W'(A_0B_0 +A_0B_1+A_1B_0-A_1B_1)) =& \frac{2 \sqrt{2} d}{d+2}
\end{align}
that depends on the local dimension $d$. For $d\geq 5$ we have $\beta > 2$, a violation of the CHSH inequality, indicating that in this case the observed correlations cannot be explained by a local hidden-variable model. So, even though a local hidden-variable model can account for all the correlations obtained from Werner states with $d\geq 5$ that result from a single round of local projective measurements, no such model could account for the correlations obtained when a sequence of two projective measurements is performed by each party.

The example \cite{Popescu1995} gave was concerned with a specific class of quantum states. Clearly, those states display some sort of nonlocality but they have a standard local hidden-variable model. Thus, the question naturally arises how to formulate the notion of locality in sequential correlation scenarios.

Another example of hidden nonlocality was reported in \cite{Gisin1996} where examples of entangled states in dimension $d=2$ were presented that do not violate the CHSH inequalities for rounds of single measurements but do so when sequences of generalised measurements, given by positive-operator valued measures (POVMs), are performed.
It is worth mentioning that, until recently, all known examples of hidden non-locality were derived for states that are local under projective measurements. Therefore, it was open whether the non-locality of these states could simply be detected by allowing general measurements. In fact, while there exists a local model for general measurements acting on some entangled Werner states \cite{Barrett2002a}, these states do not display hidden non-locality \`a la Popescu. This  question was solved in a recent work, where a class of entangled two-qubit states was presented that have a local model for general measurements (POVMs) for rounds of single measurements, but violate a Bell inequality after local filtering operations \cite{Hirsch2013}.

For the sake of simplicity let us for now focus on the case considered by Popescu, i.e. a sequence of two measurements for each party, where the first measurement by each party is always the same. We will denote by $x_2,y_2$ the measurement settings for the second measurement and by $a_i,b_i$ for $i=1,2$ the outcome of the $i$-th measurement of the parties $A,B$.

Obviously, the notion of locality in the current scenario of sequential measurements should include the standard notion of locality in the sense of Bell, that the probability distribution  should be able to be decomposed as
\begin{equation}
 \label{eq:HiddenNonlocality1}
P(a_1a_2b_1 b_2 | x_2  y_2)= \sum_\lambda p_\lambda P_A^\lambda(a_1a_2| x_2 ) P_B^\lambda(b_1b_2|y_2).
\end{equation}
After the discussion of Popescu's example it is also clear that another necessary condition for an appropriate definition of locality in sequential scenarios is the absence of hidden nonlocality. Therefore, one will further require that all possible post-selections have a local model as well, i.e.
\begin{align}
 \label{eq:HiddenNonlocality2b}
P(a_1a_2 b_2 | x_2 y_2  b_1 )&= \sum_\lambda p_{\lambda|  b_1} P_A^\lambda(a_1 a_2| x_2 ) P_B^\lambda(b_2| y_2)\\
\label{eq:HiddenNonlocality2a}
P(a_2 b_1 b_2 | x_2 y_2  a_1 )&= \sum_\lambda p_{\lambda|  a_1} P_A^\lambda(a_2| x_2 ) P_B^\lambda(b_1b_2| y_2)\\
\label{eq:HiddenNonlocality2ab}
P(a_2 b_2 | x_2 y_2 a_1  b_1 )&= \sum_\lambda p_{\lambda| a_1 b_1} P_A^\lambda(a_2| x_2 ) P_B^\lambda(b_2| y_2)
\end{align}
for all values of $(a_1,b_1)$ and where the weights $p_{\lambda|\cdot}$ will in general depend on the outputs of the first measurement round.

Let us return to the explicit example by Popescu. Denote the first measurements of $A$ and $B$ by the projectors $P_{a_1}$ and $Q_{b_1}$; and the second measurements by $\tilde{P}_{a_2|x_2}$ and $\tilde{Q}_{b_2|y_2}$. Then the probabilities read
\begin{multline}
\label{eq:popescuqu}
P(a_1a_2b_1 b_2 | x_2  y_2)\\= \trace\left(P_{a_1}\tilde{P}_{a_2|x_2}P_{a_1}\otimes Q_{b_1}\tilde{Q}_{b_2|y_2}Q_{b_1} W\right).
\end{multline}

Now, for Popescu's example the projections of the first and second measurement commute for both $A$ and $B$. Thus, the expression \eqref{eq:popescuqu} can be seen as correlations arising from a single projective measurement on each side and are therefore, due to the explicit hidden-variable model constructed by Werner, local in the sense of \eqref{eq:HiddenNonlocality1}. On the other hand,  they do not fulfil the condition \eqref{eq:HiddenNonlocality2ab}, for the probabilities post-selected on the first outcome of the first measurement,
\begin{equation}
 P(a_2 b_2 | x_2 y_2 , a_1=1, b_1 =1 ),
\end{equation}
violate the CHSH inequality.

As said, the first condition \eqref{eq:HiddenNonlocality1} is nothing but the standard locality condition in the spirit of Bell between the two parties $A$ and $B$ when the pairs $(a_1,a_2)$ and $(b_1,b_2)$ are seen as one output for $A$ and $B$ respectively. The remaining conditions \eqref{eq:HiddenNonlocality2b} through \eqref{eq:HiddenNonlocality2ab} ensure that there is no hidden nonlocality as the correlations that arise from the subsequent measurement can be simulated by a local hidden-variable model whatever results were obtained in the first measurement round. As we will see in the following, these necessary requirements are in general not sufficient to capture the notion of locality in a sequential correlation scenario.

\section{Nonlocality in sequential correlation scenarios}

Popescu's example already showed that the standard notion of locality is not sufficient to capture the behaviour of correlations that can arise in a sequential correlation scenario. As said, the correlations obtained from rounds of single measurements are local as they admit a standard local hidden-variable model. One way to interpret the nonlocality revealed in Popescu's example is to observe that the correlations do not fulfil condition \eqref{eq:HiddenNonlocality2ab}. On the other hand, the post-selection of the ensemble in \eqref{eq:HiddenNonlocality2ab} is an operation that can be performed by the two parties locally and can thus be seen as a local preparation of a physical resource that is then subjected to a standard Bell test. This way of looking at Popescu's example gives rise to an operational definition of nonlocality. In the following we present a general framework for this operational characterisation of nonlocality adapted to sequential correlation scenarios.

\subsection{Time-ordered local models}
Before we turn to the task of characterising nonlocality in operational terms, however, let us examine the lack of a decomposition as in \eqref{eq:HiddenNonlocality2ab} for sequential correlations. This form of nonlocality of correlations can be understood as not admitting a local and causal hidden-variable model as mentioned in \cite{Zukowski1998}.

\begin{dfn}[Time-ordered local models]
\label{dfn:sequentiallocal}
Given sequential correlations with respect to $(s,t)$, described by
the joint probabilities
\begin{equation}
P(a_1\ldots a_s b_1 \ldots b_t|x_1\ldots x_s y_1 \ldots y_t).
\end{equation}
The set $\mathsf{TOLoc}$ of correlations admitting a \emph{time-ordered local model} is given by correlations that can be decomposed as
\begin{multline}
\label{eq:TOLocaldecomp}
P(a_1\ldots a_s b_1 \ldots b_t|x_1\ldots x_s y_1 \ldots y_t)\\
=\sum_\lambda p_\lambda P_A^\lambda(a_1\ldots a_s|x_1\ldots x_s)P_B^\lambda(b_1\ldots b_t|y_1\ldots y_t)
\end{multline}
where the positive weights $p_\lambda$ sum to unity and the conditional probabilities $P_A^\lambda$ and $P_B^\lambda$ are sequential, i.e. for all $\lambda$ we have
\begin{align}
\label{eq:causalmodela}
\sum_{a_j,\ldots, a_{s}}P_A^\lambda(a_1\ldots a_{s}|x_1\ldots x_{s})
\end{align}
is independent of $(x_j,\ldots, x_s)$ for all $1\leq j \leq s$ and
\begin{align}
\label{eq:causalmodelb}
\sum_{b_j,\ldots, b_{t}}P_B^\lambda(b_1\ldots b_{t}|y_1\ldots y_{t})
\end{align}
is independent of $(y_j,\ldots, y_t)$ for all $1\leq j \leq t$.
\end{dfn}
\rodrigo{Let us illustrate the definition of Time-ordered local models for a simple example of $s=t=2$. In this case, a time-ordered local model reads
\begin{multline}
P(a_1 a_2 b_1 b_2|x_1 x_2 y_1 y_2)\\
=\sum_\lambda p_\lambda P_A^\lambda(a_1 a_2|x_1 x_2)P_B^\lambda(b_1 b_2|y_1y_2)
\end{multline}
where we demand also that $\sum_{a_2}P^{\lambda}_A(a_1a_2|x_1x_2)$ is independent of $x_2$ for all $\lambda$, and equivalently for $P_B^{\lambda}$. This should be interpreted as $\lambda$ carrying information about the local instructions $P_A^{\lambda}$ and $P_B^{\lambda}$, which fulfil the condition imposed by the sequential ordering of the measurements: the behaviour of $(x_1,a_1)$ does not depend on the posterior input $x_2$, and equivalently for Bob. }

Let us compare these models with the standard  formulation of local hidden-variable models by Bell. The theorem of Bell assumes a certain causal structure between the hidden variable $\lambda$ and the events of measurements $x,y$ and outcomes $a,b$ of two separated parties to derive linear constraints, in the form of inequalities, on the joint probabilities $P(ab|xy)$.

Formally, a \emph{causal structure} is a set of variables $V$ and a set of ordered pairs of distinct variables $(x,a)$ determining that $x$ is a direct cause of $a$ relative to $V$ \citep{Pearl2009,Spirtes2001}. A convenient way to represent causal structures is through \emph{directed acyclic graphs} (DAGs), where every variable $x\in V$ is a vertex and every ordered pair $(x,a)$ is represented by a directed edge from $x$ to $a$.

In the standard Bell scenario of two parties there are the observed variables $x,y,a,b$ and further the hidden variable $\lambda$, a common cause of both outputs $a$ and $b$. Thus, in this case one arrives at the causal structure presented in figure \ref{fig:causalstruct}.

\begin{figure}[tbp]
\begin{center}
\input{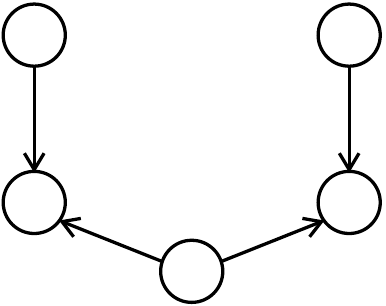_t}
\end{center}
\caption{Causal structure of the standard bipartite Bell scenario. The observed variables are the inputs $x,y$ of the two parties and their respective outputs $a$ and $b$; further a hidden variable $\lambda$ is assumed as a common cause for both $a$ and $b$.}
\label{fig:causalstruct}
\end{figure}

As for time-ordered local models let us for definiteness start
with the simple case of two parties each performing a sequence of
two measurements. The observed variables in this case are
$x_1,x_2,y_1,y_2,a_1,a_2,b_1,b_2$, where $x_i$ and $a_i$ denote
the $i$-th measurement setting and $i$-th outcome for $A$, and
$y_i$ and $b_i$ denote the $i$-th measurement setting and $i$-th
outcome for $B$; further a hidden variable $\lambda$ that is a
common cause for all outputs is assumed. Clearly, there are direct
causal influences from $x_i$ to $a_i$, from $y_i$ to $b_i$ and
from $\lambda$ to all outputs. As we are treating the parties $A$
and $B$ as separated, we exclude causal influences from inputs of
one party to the outputs of the other. Later measurement outcomes
of one party, however, will in general depend on earlier settings
or outcomes of that party. The response of one party for its
$i$-th measurement should depend only  on the given hidden
variable $\lambda$, the first $i$ measurement settings and the
first $i-1$ measurement outcomes of that party.  The resulting
causal structure is shown in figure \ref{fig:seqstructure}.

Note, any collection of conditional probabilities $\lbrace
P_A^\lambda, P_B^\lambda| \lambda \in \Lambda \rbrace$ fulfilling
the conditions of \eqref{eq:causalmodela} and
\eqref{eq:causalmodelb} defines via \eqref{eq:TOLocaldecomp} valid
bipartite sequential correlations admitting a time-ordered local
model for any distribution $p_\lambda$ of the hidden variable.
Once we fix the causal structure, expressed in the conditions
\eqref{eq:causalmodela} and \eqref{eq:causalmodelb}, no
fine-tuning of the model parameter $p_\lambda$ is needed to
obtain correlations with the correct causal independence
conditions. The fact that the models defined on the causal
structure shown in figure \ref{fig:seqstructure} do not require
fine-tuning makes them the natural generalisation of local
hidden-variable models to sequential correlation scenarios. As
already noted in \cite{Zukowski1998}, one can easily see that
correlations admitting such a model do not display hidden
nonlocality.

\begin{figure}[tbp]
\begin{center}
\input{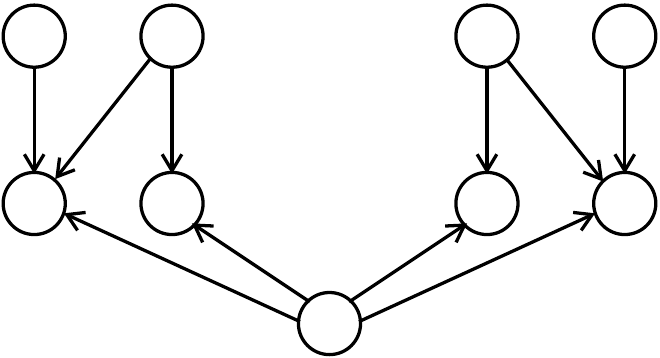_t}
\end{center}
\caption{Causal structure for the bipartite sequential correlation scenario with sequences of two measurements for each party. The observed variables are the inputs $x_1,x_2$ of the first party, the inputs $y_1, y_2$ of the second party and the corresponding outputs $a_1,a_2$ and $b_1,b_2$. The first output of one party is determined by the first input of that party and the hidden variable $\lambda$; the second output depends on both inputs of the respective party and the hidden variable $\lambda$.}
\label{fig:seqstructure}
\end{figure}

\begin{prop}
\label{prop:nohidden}
Let $P$ be sequential correlations with respect to $(s,t)$ that admit a time-ordered local model.
Then all correlations obtained by post-selection admit a time-ordered local model.
\end{prop}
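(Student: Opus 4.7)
The plan is to start from the TOLoc decomposition \eqref{eq:TOLocaldecomp} of $P$, use the sequentiality conditions \eqref{eq:causalmodela}--\eqref{eq:causalmodelb} to factor out the contributions of the post-selected outcomes, and renormalize the weights to obtain a TOLoc for the post-selected distribution. For concreteness I would first treat post-selection on the pair $(\tilde a_1,\tilde b_1)$ of Alice's and Bob's first outcomes, obtained with inputs $(\tilde x_1,\tilde y_1)$; other cases follow by the same technique (summing out the non-post-selected outcomes first, and iterating if several initial rounds are post-selected).

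First, by Bayes' rule the post-selected distribution is the joint probability with $(a_1,b_1,x_1,y_1)$ fixed to the post-selected values, divided by the marginal $P(\tilde a_1\tilde b_1|\tilde x_1\tilde y_1)$. I would then apply \eqref{eq:TOLocaldecomp} to the numerator and invoke the sequentiality of $P_A^\lambda$ with $j=2$ in \eqref{eq:causalmodela}: the sum $\sum_{a_2\ldots a_s}P_A^\lambda(\tilde a_1 a_2\ldots a_s|\tilde x_1 x_2\ldots x_s)$ is independent of $x_2,\ldots,x_s$ and hence defines a well-behaved marginal $P_A^\lambda(\tilde a_1|\tilde x_1)$. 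Dividing gives the factorization
\begin{equation}
P_A^\lambda(\tilde a_1 a_2\ldots a_s|\tilde x_1 x_2\ldots x_s)=P_A^\lambda(\tilde a_1|\tilde x_1)\,Q_A^\lambda(a_2\ldots a_s|x_2\ldots x_s),
\end{equation}
valid for every $\lambda$ with $P_A^\lambda(\tilde a_1|\tilde x_1)>0$, with $Q_A^\lambda$ the conditional distribution of the remaining outcomes (the remaining $\lambda$ carry zero weight in the post-selected ensemble and can be dropped); the same factorization holds on Bob's side.

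Substituting and dividing by $P(\tilde a_1\tilde b_1|\tilde x_1\tilde y_1)=\sum_\lambda p_\lambda P_A^\lambda(\tilde a_1|\tilde x_1)P_B^\lambda(\tilde b_1|\tilde y_1)$ yields
\begin{equation}
P'(a_2\ldots a_s b_2\ldots b_t|x_2\ldots x_s y_2\ldots y_t)=\sum_\lambda \tilde p_\lambda\, Q_A^\lambda(a_2\ldots a_s|x_2\ldots x_s)Q_B^\lambda(b_2\ldots b_t|y_2\ldots y_t),
\end{equation}
with renormalized weights $\tilde p_\lambda := p_\lambda P_A^\lambda(\tilde a_1|\tilde x_1)P_B^\lambda(\tilde b_1|\tilde y_1)/P(\tilde a_1\tilde b_1|\tilde x_1\tilde y_1)$, which are non-negative and sum to one.

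The last step is to verify that $Q_A^\lambda$ and $Q_B^\lambda$ are themselves sequential. For any $j\ge 2$, summing $Q_A^\lambda$ over $a_j,\ldots,a_s$ gives $[\sum_{a_j\ldots a_s}P_A^\lambda(\tilde a_1 a_2\ldots a_s|\tilde x_1 x_2\ldots x_s)]/P_A^\lambda(\tilde a_1|\tilde x_1)$, whose $x$-dependence is precisely that of \eqref{eq:causalmodela} and is therefore independent of $x_j,\ldots,x_s$, as required; analogously for $Q_B^\lambda$. The main obstacle is purely notational bookkeeping; the substantive content is the observation that sequentiality is exactly what is needed for the Bayesian update on early outcomes to preserve both the product structure across $A$ and $B$ and the causal-independence conditions on each side.
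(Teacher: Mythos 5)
Your proposal is correct and follows essentially the same route as the paper: apply Bayes' rule, use the sequentiality conditions to extract a well-defined marginal $P_A^\lambda(\tilde a_1|\tilde x_1)$, reweight the hidden variable by $p_\lambda P_A^\lambda(\tilde a_1|\tilde x_1)P_B^\lambda(\tilde b_1|\tilde y_1)/P(\tilde a_1\tilde b_1|\tilde x_1\tilde y_1)$, and take the conditional response functions as the new local strategies. The only differences are cosmetic — you condition on both parties at once where the paper conditions on one party and iterates, and you explicitly verify that the conditional response functions remain sequential, a check the paper leaves implicit.
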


\begin{proof}
Consider post-selection on the first measurement of $A$ to yield $a_1$ given the setting $x_1$. Given the time-ordered local model for $P$
\begin{multline}
P(a_1\ldots a_s b_1 \ldots b_t|x_1\ldots x_s y_1 \ldots y_t)\\
=\sum_\lambda p_\lambda P_A^\lambda(a_1\ldots a_s|x_1\ldots x_s)P_B^\lambda(b_1\ldots b_t|y_1\ldots y_t)
\end{multline}
the post-selected correlations are given by
\begin{multline}
P_{a_1|x_1}(a_2\ldots a_s b_1 \ldots b_t|x_2\ldots x_s y_1 \ldots y_t)\\
=\frac{1}{P(a_1|x_1)}P(a_1\ldots a_s b_1 \ldots b_t|x_1\ldots x_s y_1 \ldots y_t)
\end{multline}
and can be written as
\begin{multline}
P_{a_1|x_1}(a_2\ldots a_s b_1 \ldots b_t|x_2\ldots x_s y_1 \ldots y_t)\\
= \sum_\lambda \tilde{p}_\lambda \tilde{P}_A^\lambda(a_2\ldots a_s|x_2\ldots x_s)P_B^\lambda(b_1\ldots b_t|y_1\ldots y_t),
\end{multline}
where we defined
\begin{align}
\tilde{p}_\lambda =\frac{p_\lambda P_A^\lambda(a_1|x_1)}{P_A(a_1|x_1)}
\end{align}
and
\begin{align}
\tilde{P}^\lambda_A(a_2\ldots a_s|x_2\ldots x_s)=\frac{P_A^\lambda(a_1\ldots a_s|x_1\ldots x_s)}{P_A^\lambda(a_1|x_1)}.
\end{align}
As the resulting correlations admit a time-ordered local model, further post-selections by either party will again result in correlations admitting such a model.
\end{proof}

Thus, one way to understand the nonlocality revealed by Popescu's argument is to observe that the corresponding correlations do not admit a time-ordered local model as in definition \ref{dfn:sequentialLocality}.
\subsection{Operational characterisation of nonlocality}
As said above, however, identifying post-selection as a possible operation that can be performed locally by the parties opens the possibility to characterise nonlocality operationally. So, instead of defining locality in a given correlation scenario through a specific class of models we set out to define nonlocality in operational terms; this was done in \cite{Gallego2012} for the case of multipartite nonlocality with single measurements in each round. The general idea within this approach is to interpret nonlocality as a resource, i.e a property of correlations that cannot be created by a certain set of allowed operations. Just as one can define entangled states as non-separable states, one can alternatively define entanglement as the resource that cannot be created between two distant laboratories by the use of local operations and classical communication (LOCC) \cite{Horodecki2009}. In \cite{Gallego2012} an analogous operational framework for the resource of nonlocality was developed. In the following we briefly review the elements of this framework and adapt them to the current scenario of sequential measurements.

The set of objects in the present scenario is the set of sequential correlations described by joint probabilities of the form
\begin{equation}
P(a_1\ldots a_{s}b_1\ldots b_{t}|x_1\ldots x_{s}y_1\ldots y_{t})
\end{equation}
that, as in definition \ref{dfn:seqscenario2}, can be obtained in a correlation experiment with sequences of measurements. To characterise nonlocality as a resource we need to specify the set of allowed operations in this scenario. A valid protocol for the two parties consists of two stages, the \emph{preparation stage} and the \emph{measurement stage}.

The first stage takes place before the inputs for the final
nonlocality test are provided and allows the parties to perform
measurements on their share of the physical system and communicate
the corresponding results among each other. Depending on the
obtained and communicated results the parties may choose to
perform further measurements.

Note that classical communication is allowed at this stage;  as
the parties have not yet received the inputs for the final Bell
test, however, this communication cannot be used to create any
nonlocal correlations.

The second stage starts when the inputs for the final nonlocality experiment are provided; from this point on no more communication is allowed. The local operations every party can now perform consist of processing the classical inputs and outputs, referred to as \emph{wirings}. Upon receiving the input $x$ party $A$ can choose an arbitrary function $f_1$ to determine the input $x_1$ for the first measurement yielding an outcome $a_1$; the inputs for the following measurements are determined by a function of the provided input $x$ and all measurement outcomes obtained so far, i.e. $x_i = f_i(x,a_1, \ldots, a_{i-1})$. Lastly, the final output $a$ is determined by a function $g$ of the input $x$ and all outputs $a_i$. Or, more formally,
\begin{dfn}[Sequential wiring]
\label{dfn:seqwiring}
Let $P$ be bipartite sequential correlations with respect to  $(s,t)$. A \emph{sequential wiring} for party $A$ is specified by functions $\lbrace f_1,\ldots f_s,g\rbrace$ and takes $P$ to the correlations $P'$, where $P'$ is characterised by
\begin{multline}
P'(a b_1\ldots b_t|x y_1\ldots y_t)\\
=\sum_{\substack{a_1\ldots a_s\\ \mathrm{s.t.}\;g(x,a_1\ldots a_s)=a}} P(a_1\ldots a_s b_1 \ldots b_t|\xi_1\ldots \xi_s y_1\ldots y_t)
\end{multline}
with $\xi_1 = f_1(x)$ and $\xi_i = f_i(x, a_1,\ldots, a_{i-1})$ for $i\geq 2$.
\end{dfn}

It is easy to see that the resulting correlations $P'$ are well defined; in particular they are sequential with respect to $(1,t)$. Analogously, one can define a wiring for party $B$. Furthermore, a sequential wiring can act on $n$ different sequential probability distributions members. In this case, a set of extra functions specifies the order according to which each party measures its share of the $n$ systems, which may depend on the input  and the previous outcomes.

With the allowed operations defined, we can now define nonlocality as the property of sequential correlations that cannot be created by these allowed operations.

\begin{dfn}[Operationally local correlations]
\label{dfn:sequentialLocality}
Consider the set of bipartite correlations that are sequential with respect to $(s,t)$. Sequential correlations $P$ belong to the set $\mathsf{OpLoc}$ of \emph{operationally local correlations}, if for any $n$ the product correlations $P^{\times n}$ are mapped  by any valid protocol to bipartite correlations that are local in the standard sense.
\end{dfn}

\rodrigo{In other words, correlations $P \in \mathsf{OpLoc}$ cannot be employed to violate a standard Bell inequality between $A$ and $B$. Even if we allow for an arbitrary large number of copies of $P$ and any local processing of information.}
In the standard bipartite scenario with single measurements the operational definition of locality coincides with the characterisation by local hidden-variable models as in \eqref{eq:bellLocal} \cite{Gallego2012}. For the present situation of sequential correlations we can show that correlations admitting a time-ordered local model are compatible with our operational definition in the sense that no allowed operation can map time-ordered local correlations into nonlocal correlations.

\begin{prop}
\label{thm:modelscomp}
The set of time-ordered local correlations is contained in the set of operationally local correlations, i.e. $\mathsf{TOLoc} \subset \mathsf{OpLoc}$.
\end{prop}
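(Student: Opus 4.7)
The goal is to show that if $P$ admits a time-ordered local decomposition, then for any number of copies $n$ and any protocol consisting of a preparation stage (measurements with classical communication) followed by a measurement stage (sequential wirings with no communication), the resulting bipartite distribution $P'(ab|xy)$ admits a standard Bell-local decomposition. The natural strategy is to promote the joint hidden variable of the $n$ copies into the hidden variable of the final Bell distribution, and to check that all the intermediate processing done by the parties respects the factorisation conditioned on it.

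First I would reduce the problem to deterministic strategies. Each sequential distribution $P_A^\lambda(a_1\ldots a_s|x_1\ldots x_s)$ can be written as a convex combination of deterministic sequential response functions, i.e. of tuples $(f_1^\lambda,\ldots,f_s^\lambda)$ where $a_j=f_j^\lambda(x_1,\ldots,x_j,a_1,\ldots,a_{j-1})$, and analogously on Bob's side. Absorbing the mixing weights of these convex decompositions into $\lambda$, without loss of generality $P_A^\lambda$ and $P_B^\lambda$ are deterministic. Then $P^{\times n}$ admits the time-ordered local model indexed by $\Lambda=(\lambda_1,\ldots,\lambda_n)$ with product weight $\prod_i p_{\lambda_i}$, such that each party's response to any chosen input sequence on any of its $n$ subsystems is a fixed function of $\Lambda$.

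Next I would argue, by induction on the rounds of classical communication in the preparation stage, that every measurement outcome obtained by either party, every input they feed into their subsystems, and every classical message they exchange is a deterministic function of $\Lambda$ together with independent local random tapes $r_A$ and $r_B$ (introduced to cover any randomised choices the parties might make). Indeed, Alice's actions depend only on her own previous outcomes, her received messages, and $r_A$; her outcomes given $\Lambda$ are determined by her chosen inputs; and her outgoing messages are functions of this information. Entering the measurement stage with inputs $x,y$, since no more communication is allowed, Alice's final wiring output is a function $a=f_A(x,\Lambda,r_A,r_B)$ and Bob's is $b=f_B(y,\Lambda,r_A,r_B)$. Defining $\mu=(\Lambda,r_A,r_B)$ as the new hidden variable with its product weight, one gets
\begin{equation}
P'(ab|xy)=\sum_\mu p_\mu\,\delta_{a,f_A(x,\mu)}\,\delta_{b,f_B(y,\mu)},
\end{equation}
which is of the standard Bell-local form \eqref{eq:bellLocal}.

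The delicate point, and where I expect the main conceptual subtlety to lie, is the classical communication in the preparation stage: at first glance Alice's inputs to some of her subsystems can depend on Bob's outcomes via messages he sent, which seems to break the locality structure that the time-ordered decomposition guarantees for each individual copy. The resolution, which the inductive argument makes rigorous, is that any such cross-dependence is entirely mediated by the shared hidden variable $\Lambda$ and by classical information $r_A,r_B$ whose distribution factorises between the parties; this is precisely what is needed for the final Bell distribution to fall inside the standard local polytope. Everything else in the argument — well-definedness of the sequential wirings, the fact that post-selection steps during the preparation stage preserve the time-ordered local form (by Proposition~\ref{prop:nohidden}), and the handling of the final wiring functions $f_i,g$ — is routine bookkeeping on top of this core observation.
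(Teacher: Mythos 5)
Your proposal is correct, but it proves the statement along a different route than the paper. The paper's own proof is modular and very short: it notes that everything the parties do during the preparation stage amounts to post-selecting the shared sequential correlations, which by Proposition~\ref{prop:nohidden} leaves them inside $\mathsf{TOLoc}$ (with the hidden-variable weights reconditioned on the observed data), and then simply asserts that the sequential wirings of the measurement stage map any time-ordered local correlations to standard bipartite local correlations. You instead determinize first --- decomposing each $P_A^\lambda$, $P_B^\lambda$ into mixtures of deterministic time-ordered response functions, which is a standard fact for one-party sequential boxes though you state it without proof --- and then run a single induction over the whole protocol transcript, concluding that the final outputs are $a=f_A(x,\Lambda,r_A,r_B)$ and $b=f_B(y,\Lambda,r_A,r_B)$ with $\mu=(\Lambda,r_A,r_B)$ distributed independently of $(x,y)$, which is manifestly of the form \eqref{eq:bellLocal}. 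What your approach buys is an explicit treatment of exactly the two points the paper leaves implicit: the adaptive classical communication in the preparation stage (you show the cross-dependence it creates is entirely mediated by the hidden variable and locally generated randomness, which is the same resolution the paper encodes in the reconditioned weights $p_{\lambda|\cdot}$ of Proposition~\ref{prop:nohidden}) and the final step that wirings of $\mathsf{TOLoc}$ correlations are Bell-local, which the paper asserts without argument; it also covers randomized local strategies via the tapes $r_A,r_B$. What the paper's route buys is brevity and reuse of the already-proved post-selection lemma, at the cost of leaving the wiring step and the communication bookkeeping to the reader. One small remark: in your final paragraph you cite Proposition~\ref{prop:nohidden} as part of the "routine bookkeeping", but your induction in fact never needs it --- the determinization subsumes it --- so you could either drop that reference or make clear it is an alternative, not an ingredient, of your argument.
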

\begin{proof}


During the preparation stage $A$ and $B$ implement measurements on a subset of the $n$ systems shared by them and communicate the corresponding results. The resulting correlations are nothing but post-selections of the original correlations, which by proposition \ref{prop:nohidden} admit a time-ordered local model.

The local sequential wirings applied during the measurement stage map these time-ordered local correlations to bipartite local correlations.
\end{proof}

\rodrigo{Another way of formulating Prop. \ref{thm:modelscomp} is the following: given sequential correlations $P$, it suffices to find a decomposition as in \eqref{eq:TOLocaldecomp} to ensure that any protocol processing locally $n$ copies of $P$ will not violate a Bell inequality between $A$ and $B$.}
The set of correlations admitting a time-ordered local model is therefore contained in the set of correlations that are local in the operational sense. From this operational point of view we see that in Popescu's example the parties use the operation of post-selection to create nonlocal bipartite correlations.

However, the situation described by Popescu is special insofar as there is only one setting for the first measurements of $A$ and $B$. As we show next, in a situation like this the existence of a time-ordered local model for the correlations is equivalent to all post-selections having a time-ordered local model.

\begin{prop}
Consider sequential correlations with respect to $(s,t)$ and assume that for the first measurement by $A$ and the first measurement by $B$ there is only one setting each, i.e. the correlations are characterised by
\begin{equation}
P(a_1\ldots a_s b_1 \ldots b_t|x_{2}\ldots x_s y_{2}\ldots y_t).
\end{equation}
Then the following are equivalent:
\begin{itemize}
\item[(i)] P admits a time-ordered local model with respect to $(s,t)$.
\item[(ii)]All post-selections on the first measurement of $A$ and the first measurement of $B$ admit a time-ordered local model with respect to $(s-1,t-1)$.
\end{itemize}
\end{prop}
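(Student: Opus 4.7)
The plan is to prove the two directions separately, with $(i)\Rightarrow (ii)$ being essentially immediate and $(ii)\Rightarrow (i)$ requiring an explicit construction.

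For $(i)\Rightarrow (ii)$, I would simply invoke Proposition \ref{prop:nohidden}: if $P$ admits a time-ordered local model with respect to $(s,t)$, then every post-selection admits one as well, and in particular post-selecting on the (unique) first measurement of $A$ and $B$ yields correlations with respect to $(s-1,t-1)$ that inherit the time-ordered local decomposition.

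For the nontrivial direction $(ii)\Rightarrow (i)$, the key observation is that, because $A$ and $B$ each have only one setting for their first measurement, the marginal joint distribution $P(a_1 b_1)$ requires no input-dependent structure at all and can be treated as shared randomness that labels which post-selected TOLoc model to use. Concretely, assume that for every $(a_1^*, b_1^*)$ the post-selected correlations admit a decomposition
\begin{multline*}
P(a_2\ldots a_s b_2\ldots b_t\mid a_1^*,b_1^*,x_2\ldots y_t) \\
=\sum_{\lambda} p^{a_1^* b_1^*}_{\lambda}\,P_A^{a_1^* b_1^*,\lambda}(a_2\ldots a_s\mid x_2\ldots x_s)\,P_B^{a_1^* b_1^*,\lambda}(b_2\ldots b_t\mid y_2\ldots y_t),
\end{multline*}
with each $P_A^{a_1^* b_1^*,\lambda}$, $P_B^{a_1^* b_1^*,\lambda}$ sequential with respect to $(s-1,t-1)$. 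I would then define an extended hidden variable $\Lambda=(a_1^*,b_1^*,\lambda)$ with weights $p_\Lambda = P(a_1^* b_1^*)\, p^{a_1^* b_1^*}_{\lambda}$ and local response functions
\begin{align*}
\tilde P_A^\Lambda(a_1 a_2\ldots a_s\mid x_2\ldots x_s) &= \delta_{a_1,a_1^*}\,P_A^{a_1^* b_1^*,\lambda}(a_2\ldots a_s\mid x_2\ldots x_s),\\
\tilde P_B^\Lambda(b_1 b_2\ldots b_t\mid y_2\ldots y_t) &= \delta_{b_1,b_1^*}\,P_B^{a_1^* b_1^*,\lambda}(b_2\ldots b_t\mid y_2\ldots y_t).
\end{align*}

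It then remains to verify two things. First, by direct substitution one checks that $\sum_\Lambda p_\Lambda \tilde P_A^\Lambda \tilde P_B^\Lambda$ reproduces $P$, using that multiplying the post-selected distribution by $P(a_1^* b_1^*)$ recovers the joint. Second, one checks the sequentiality conditions \eqref{eq:causalmodela}--\eqref{eq:causalmodelb} for the new response functions; because the Kronecker factor $\delta_{a_1,a_1^*}$ does not depend on any input and $P_A^{a_1^* b_1^*,\lambda}$ is sequential by assumption, the marginals of $\tilde P_A^\Lambda$ over $(a_j,\ldots,a_s)$ with $j\geq 2$ depend only on $x_2,\ldots,x_{j-1}$, and the $j=1$ condition is trivial since $x_1$ has a unique value. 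The analogous check holds for $B$. The only subtle point, and what one might consider the main obstacle, is bookkeeping: ensuring that the labels $(a_1^*,b_1^*)$ that encode the post-selection branch are consistently carried through both the weights and the response functions so that the Kronecker deltas correctly enforce the outcomes of the first round. Once this bookkeeping is laid out, both verifications are routine.
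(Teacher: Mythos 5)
Your proposal is correct and follows essentially the same route as the paper: the easy direction via Proposition \ref{prop:nohidden}, and for the converse the same construction of an enlarged hidden variable $(a_1^*,b_1^*,\lambda)$ with weights $P(a_1^*b_1^*)\,p_{\lambda|a_1^*b_1^*}$ and Kronecker-delta response functions enforcing the first-round outcomes. The sequentiality and reconstruction checks you spell out are exactly the routine verifications implicit in the paper's argument.
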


\begin{proof}
That (i) implies (ii) is clear by proposition \ref{prop:nohidden}. To see the converse consider the post-selections
\begin{multline}
P(a_{2}\ldots a_s b_{2}\ldots b_t|x_{2}\ldots x_s y_{2}\ldots y_t, a_1 b_1)=\\
\sum_\lambda p_{\lambda|a_1,b_1} P^\lambda_A(a_2 \ldots a_s|x_2 \ldots x_s)P^\lambda_B(b_2 \ldots b_t|y_2 \ldots y_t)
\end{multline}
all of which admit a time-ordered local model by assumption. Define a new hidden variable $\mu = (\lambda, a_1', b_1')$ distributed according to $q_\mu = p_{\lambda |a_1',b_1'} P(a_1'b_1')$ and sequential response functions
\begin{align}
\tilde{P}^\mu_A(a_1\ldots a_s|x_1\ldots x_s) &= \delta_{a_1}^{a_1'} P_A^\lambda(a_2 \ldots a_s|x_2 \ldots x_s)\\
\tilde{P}^\mu_B(b_1\ldots b_t|y_1\ldots y_t) &= \delta_{b_1}^{b_1'} P_B^\lambda(b_2 \ldots b_t|y_2 \ldots y_t)
\end{align}
to get the time-ordered local model with respect to $(s,t)$
\begin{multline}
P(a_{1}\ldots a_s b_{1}\ldots b_t|x_{1}\ldots x_s y_{1}\ldots y_t)\\
=\sum_\mu q_\mu \tilde{P}^\mu_A(a_1 \ldots a_s|x_1 \ldots x_s)\tilde{P}^\mu_B(b_1 \ldots b_t|y_1 \ldots y_t).
\end{multline}
\end{proof}

So for scenarios where the first measurements have only one setting, i.e. the first measurements are always the same, the absence of hidden nonlocality when post-selecting on these first measurements is equivalent to the existence of a time-ordered local model for the correlations. In particular, in a scenario as described by Popescu we have that all sequential correlations whose post-selections fulfil the CHSH inequality necessarily admit a time-ordered local model. In the general case, however, the question arises whether there are forms of nonlocality that are different from standard Bell nonlocality or Popescu's hidden nonlocality.

In other words, are there correlations that are local in the standard notion, do not display hidden nonlocality, but nevertheless need to be considered nonlocal in the operational sense?

\subsection{A simple scenario}
\label{sec:detection}

To give a first answer to this question we will consider the simplest non-trivial case of sequential measurements in a bipartite scenario, namely one measurement for party $A$ and a sequence of two measurements for $B$, where for each measurement the respective party can choose from two settings yielding one of two possible outcomes. We denote the outcomes of $A$ and $B$ by $a, b_1, b_2 \in \lbrace +1, -1\rbrace $ and the inputs by $x,y_1,y_2 \in \lbrace 0,1 \rbrace$ and consider the joint probabilities $P(a b_1 b_2 |x y_1 y_2)$.

Thus, in this scenario $\mathsf{TOLoc}$ denotes the set of sequential correlations $P$ that admit a time-ordered local model for the given scenario, i.e. for $P \in \mathsf{TOLoc}$ we have
\begin{equation}
 P(a b_1 b_2 |x y_1 y_2)= \sum_\lambda p_\lambda P_A^\lambda (a|x ) P^\lambda_{B}(b_1 b_2 |y_1 y_2)
\end{equation}
with
\begin{equation}
 \sum_{b_2} P^\lambda_{B}(b_1 b_2| y_1 y_2 ) \qquad \text{independent of } y_2.
\end{equation}
Further, let $\mathsf{PostLoc}$ denote the set of sequential correlations $P$ that have a local hidden-variable model with respect to $A|B$ and whose post-selected correlations are local as well, i.e. for $P \in \mathsf{PostLoc}$ we have
\begin{equation}
\label{eq:standardlocal}
 P(a b_1 b_2 |x y_1 y_2)= \sum_\lambda p_\lambda P_A^\lambda (a|x ) P_{B}^\lambda(b_1b_2 |y_1 y_2)
\end{equation}
and
\begin{equation}
 P(a b_2| x y_2 b_1 y_1 )= \sum_\lambda p_{\lambda|b_1,y_1} P_A^\lambda(a|x )P_B^\lambda(b_2|y_2 ).
\end{equation}

Both $\mathsf{TOLoc}$ and $\mathsf{PostLoc}$ are convex polytopes, that is compact convex sets with a finite number of extreme points. By proposition \ref{prop:nohidden} correlations from $\mathsf{TOLoc}$ do not display hidden nonlocality, so that we have the inclusion $\mathsf{TOLoc} \subseteq \mathsf{PostLoc}$. Next we will show that this inclusion is in fact strict, i.e. there are correlations $P$ that are in $\mathsf{PostLoc}$ but not in $\mathsf{TOLoc}$.

In general a convex polytope can be either described by its extreme points or equivalently by the set of facet-defining half-spaces. These half-spaces are given by linear inequalities
\begin{equation}
\label{eq:inequality}
\beta(P)= \sum_{a,b_1,b_2,x,y_1,y_2} \beta_{ab_1b_2| x y_1 y_2} P(a b_1b_2 |x y_1y_2) \leq \tilde{\beta}.
\end{equation}
Using the polytope software {\textsc{porta}} \cite{Christof} we fully characterized the polytope $\mathsf{TOLoc}$ in terms of its facet-defining inequalities, see appendix \ref{appendix} for details. The problem of deciding whether $\mathsf{TOLoc} \subsetneq \mathsf{PostLoc}$ or $\mathsf{TOLoc} = \mathsf{PostLoc}$ can then be cast into a set of linear programs maximising the inequalities of $\mathsf{TOLoc}$ over probability distributions from $\mathsf{PostLoc}$. \rodrigo{Let us introduce explicitly one of the facet-defining inequalities that is of special relevance. In order to do it, let us employ a common parametrization of the joint probability distribution in terms of correlators. That is

\begin{equation}
\label{eq:probInCorr}
 \begin{aligned}
    P&(ab_1b_2|xy_1y_2)=\\
    &\frac{1}{8}\left[ 1 + a \ev{A_x} + b_1\ev{B^1_{y_1}} + b_2\ev{B^2_{y_1y_2}}\right.\\
          & + ab_1\ev{A_xB^1_{y_1}}+ab_2\ev{A_xB^2_{y_1y_2}}\\ &+b_1b_2 \ev{B^1_{y_1}B^2_{y_1y_2}}
+\left.ab_1b_2\ev{A_xB^1_{y_1}B^2_{y_1y_2}} \right],
 \end{aligned}
\end{equation}
where $\ev{A_x} = P(a=+1|x)-P(a=-1|x)$ is the expectation value of the outcome for party $A$ given input $x$, $\ev{A_x B^2_{y_1y_2}}= P(a\cdot b_2=+1|xy_1y_2)-P(a\cdot b_2=-1|xy_1y_2)$ is the expectation value of the product of the outcome of $A$ and the second outcome of $B$ given the inputs $x,y_1,y_2$, and so on.

If one defines the following linear combinations of correlators
\begin{align}
\label{eq:newB0}
 B &=\frac{1}{2}\left[(1+B^1_0)B^2_{01} - (1-B^1_0)B^2_{00}\right]\\
\label{eq:newB1}
B' &= \frac{1}{2} \left[(1-B^1_1)B^2_{11} + (1+B^1_1)B^2_{10}\right],
\end{align}
one can write one of the facet inequalities of $\mathsf{TOLoc}$ as
\begin{equation}
 \label{eq:corrIneq}
\beta(P) :=\ev{A_0 (B -B') - A_1 (B+B')} \leq 2.
\end{equation}
With this parametrization, this inequality resembles the CHSH bipartite Bell inequality. As it turns out, the inequality \eqref{eq:corrIneq} can be violated by probability distributions from $\mathsf{PostLoc}$:
\begin{equation}
\label{eq:linprogseq}
\beta_\mathsf{PostLoc} =  \max_{\mathsf{PostLoc}} \beta(P) = 4,
\end{equation}
showing that $\mathsf{TOLoc}\subsetneq\mathsf{PostLoc}$. The correlations $P^* \in \mathsf{PostLoc}$ attaining the maximum in \eqref{eq:linprogseq} have by definition a standard local decomposition with respect to $A|B$ and do not display hidden nonlocality. However, the violation of \eqref{eq:corrIneq} by $P^*$ demonstrates that these correlations cannot be explained by a time-ordered local model for sequential correlations. One may wonder whether such correlations that belong to $\mathsf{PostLoc}$ but lie outside $\mathsf{TOLoc}$ are physical, in the sense that they can be obtained with measurements on quantum states. We show that this is the case by constructing an explicit example.

We consider a GHZ state $\frac{1}{\sqrt{2}}\left(\ket{000}+\ket{111}\right)$ shared between $A$ and $B$. The first qubit is sent to $A$ while $B$ has access to the other two qubits. At each round of the experiment $A$ measures one out
of two possible observables $A_x$ with $x \in \lbrace 0, 1\rbrace$. Part $B$ first measures one of his qubits according to the observables $B^1_{y_1}$ with $y_1 \in \lbrace 0, 1\rbrace$. Then, $B$ performs a measurement on its second 
qubit, using one out of eight possible observables $B^2_{b_1,y_1,y_2}$ that depend on $y_1$, the outcome $b_1$ and $y_2$. For such a configuration, we have numerically found a maximal violation of $\beta(P)=2\sqrt{2}$. 
To achieve this it is sufficient to consider all measurements to lie in the X-Z plane of the Bloch-sphere, that is, $O(\theta)=\cos{(\theta)}Z+\sin{(\theta)}X$ with $X$ and $Z$ being the usual Pauli matrices. Setting $\theta_{a_0}=\pi/2$,
$\theta_{a_1}=-\pi$, $\theta_{b^1_0}=\theta_{b^1_1}=\pi/2$, $\theta_{b^2_{0,0,1}}=-\theta_{b^2_{0,1,0}}=\theta_{b^2_{1,1,0}}=\theta_{b^2_{1,1,1}}=\pi/4$, $\theta_{b^2_{1,0,0}}=\theta_{b^2_{1,0,1}}=3\pi/4$ and 
$\theta_{b^2_{0,0,0}}=-\theta_{b^2_{0,1,1}}=\pi/3$ we achieve the optimal value of $\beta(P)=2\sqrt{2}$ while not violating the usual CHSH inequality (conditioned on any possible values of $y_1$ and $b_1$). 
It is surprising that the observables for the first measurement performed by $B$ can be the same, its only role is to prepare the state $\frac{1}{\sqrt{2}}\left(\ket{00}+(-1)^{b_1}\ket{11}\right)$
with which the rest of the experiment is to be performed.

}

Now, as correlations from $\mathsf{TOLoc}$ are known to be compatible with our operational definition of sequential locality, the violation \eqref{eq:linprogseq} raises the question whether there is a sequential wiring that takes $P^*$ to bipartite correlations $P'$ that are nonlocal in the standard sense. In fact, we can prove an even stronger result.

\begin{thm}\label{thm:easyscenario}
Consider the bipartite sequential correlation scenario with respect to $(1,2)$, where each measurement has binary inputs and outputs. Then $\mathsf{OpLoc}=\mathsf{TOLoc}$.
\end{thm}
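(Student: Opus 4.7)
The plan is to establish the reverse inclusion $\mathsf{OpLoc} \subseteq \mathsf{TOLoc}$, since $\mathsf{TOLoc} \subseteq \mathsf{OpLoc}$ is already given by Proposition \ref{thm:modelscomp}. I proceed by contrapositive: assuming $P \notin \mathsf{TOLoc}$, I exhibit a single-copy sequential wiring that maps $P$ to bipartite correlations violating the standard CHSH inequality between $A$ and $B$. Because $\mathsf{TOLoc}$ is a polytope (as computed by \textsc{porta} in the appendix), $P \notin \mathsf{TOLoc}$ is equivalent to $P$ violating some facet-defining inequality, so it suffices to find a suitable wiring for each such facet modulo the natural relabeling symmetries (permutations of inputs and outputs for $A$ and for $B$, and swap of the roles of $(y_1,b_1)$).

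The core construction is motivated directly by inequality \eqref{eq:corrIneq}. For that facet, I take Alice's wiring to be the identity (input $x$, output $a$) and define Bob's wiring as follows: on input $y_{\text{in}} = 0$, set $y_1 = 0$, then choose $y_2 = 1$ if $b_1 = +1$ and $y_2 = 0$ if $b_1 = -1$, and output $b_{\text{out}} = b_1 b_2$; on input $y_{\text{in}} = 1$, set $y_1 = 1$, then choose $y_2 = 0$ if $b_1 = +1$ and $y_2 = 1$ if $b_1 = -1$, and output $b_{\text{out}} = b_2$. A direct computation of the wired correlators shows that $\langle A_x B_{y_{\text{in}} = 0}\rangle = \langle A_x B\rangle$ and $\langle A_x B_{y_{\text{in}} = 1}\rangle = \langle A_x B'\rangle$, where $B, B'$ are defined by \eqref{eq:newB0}–\eqref{eq:newB1}. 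After the harmless relabeling $\tilde{A}_1 = -A_1$, $\tilde{B}_1 = -B'$, the quantity $\beta(P)$ of \eqref{eq:corrIneq} becomes the CHSH expression for the wired correlations, so any violation $\beta(P) > 2$ certifies $P \notin \mathsf{OpLoc}$.

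To complete the argument, I rely on the \textsc{porta} enumeration of facets of $\mathsf{TOLoc}$ in the appendix. The trivial facets (positivity and normalisation of $P$) are automatically saturated only at the boundary of the probability simplex and cannot be violated by any valid $P$. The non-trivial facets either coincide with standard bipartite Bell inequalities (when Bob's two outcomes are bundled into a single output $(b_1,b_2)$) — these are handled by the trivial wiring that just outputs $(b_1,b_2)$ jointly and applying the equivalence of operational and Bell locality in the standard scenario \cite{Gallego2012} — or belong to the symmetry orbit of \eqref{eq:corrIneq}, in which case the wiring above, after relabeling inputs/outputs, maps the facet violation to a CHSH violation.

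The main obstacle is the completeness of this facet-by-facet treatment: one must ensure that the \textsc{porta} output yields no facet class orthogonal to both the "bundled-Bell" family and the orbit of \eqref{eq:corrIneq}. This is a finite check, enabled by the small size of the scenario and the manifest symmetry group; once verified, the implication $P \notin \mathsf{TOLoc} \Rightarrow P \notin \mathsf{OpLoc}$ follows from the explicit wirings above, which gives $\mathsf{OpLoc} \subseteq \mathsf{TOLoc}$ and hence the claimed equality.
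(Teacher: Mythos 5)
Your strategy differs from the paper's: you argue the contrapositive by exhibiting, for each violated facet of $\mathsf{TOLoc}$, an explicit single-copy wiring whose output violates CHSH, whereas the paper imposes the linear necessary conditions \eqref{eq:postcond} and \eqref{eq:wiringcond} (all post-selections local, all wired distributions local) and solves the linear programs \eqref{eq:linprogs}, one for each facet $\beta$ of $\mathsf{TOLoc}$, finding numerically that the optimum never exceeds the $\mathsf{TOLoc}$ bound. Your core construction for the facet \eqref{eq:corrIneq} is correct: choosing $y_2$ as a function of $b_1$ and outputting $b_1b_2$ (respectively $b_2$) reproduces $\langle A_x B\rangle$ and $\langle A_x B'\rangle$ as wired correlators, so $\beta(P)>2$ indeed yields a standard CHSH violation and certifies $P\notin\mathsf{OpLoc}$. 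This constructive route is attractive because, where it applies, it identifies the protocol revealing the nonlocality rather than only certifying its existence.

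However, your facet taxonomy is incomplete, and this is a genuine gap rather than a routine finite check. The appendix classifies the facets of $\mathsf{TOLoc}$ into three kinds: (i) (lifted) CHSH inequalities on the marginals $P(a_1b_1|x_1y_1)$ and $P(a_1b_2|x_1y_1y_2)$, (ii) CHSH inequalities conditioned on $(b_1,y_1)$ as in \eqref{eq:conditionedchsh}, and (iii) facets involving the full distribution, of which \eqref{eq:corrIneq} is a representative. Class (ii) is neither a standard Bell inequality of the bundled scenario nor in the symmetry orbit of \eqref{eq:corrIneq}: a distribution can satisfy every bundled Bell inequality and still violate a conditioned CHSH (this is precisely Popescu's hidden nonlocality), and a deterministic single-copy wiring, which cannot discard runs, does not in general convert such a violation into a bipartite Bell violation. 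These facets must instead be handled by post-selection, which is an allowed operation in the preparation stage; the fix is available, but your proof as written (``I exhibit a single-copy sequential wiring'') does not include it. In addition, your assumption that all class-(iii) facets form a single symmetry orbit of \eqref{eq:corrIneq} is asserted, not verified; that verification is exactly the nontrivial computational content that the paper supplies in a different form, by checking every facet at once through the linear programs solved with \textsc{yalmip}. Until both points are settled, the inclusion $\mathsf{OpLoc}\subseteq\mathsf{TOLoc}$ is not established by your argument.
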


\begin{proof}
That time-ordered local models are compatible with the operational definition, i.e. $\mathsf{TOLoc}\subset \mathsf{OpLoc}$,  was shown in proposition  \ref{thm:modelscomp}. To see the converse, consider $P$ to be compatible with the operational definition. We have that all post-selections have a local model
\begin{equation}
\label{eq:postcond}
P(ab_2|xy_2b_1y_1)= \sum_\lambda p_\lambda^{b_1|y_1} P_A^\lambda(a|x) P_B^\lambda(b_2|y_2)
\end{equation}
for $a,b_1,b_2 \in \lbrace -1,1\rbrace$ and $x,y_1,y_2 \in \lbrace 0,1 \rbrace$. Further, for all sequential wirings,  specified by functions $f_1,f_2,g$, the wired correlations
\begin{equation}
\label{eq:wiringcond}
P'(ac|xz) = \sum_{\substack{b_1,b_2\\ \text{s.t. }g(y_1,b_1,b_2)=c}} P(ab_1b_2|x f_1(z) f_2(z,b_1))
\end{equation}
are local as well. The conditions \eqref{eq:postcond} and \eqref{eq:wiringcond} are linear constraints on the probabilities of $P$, so that we can define  linear programs
\begin{equation}\label{eq:linprogs}
\begin{aligned}
\beta^\star=  &\text{ maximise } &&\beta(P) \\
&\text{ subject to} && P\text{ fulfils \eqref{eq:postcond} and \eqref{eq:wiringcond}},
\end{aligned}
\end{equation}
for all facet defining inequalities $\beta$ of $\mathsf{TOLoc}$. In the present case of just one measurement for $A$ and two for $B$, these conditions are still tractable and we solved the linear programs using the software {\textsc{yalmip}} \cite{Loefberg2004}. We find
\begin{equation}
\beta^\star = \max_{\mathsf{TOLoc}}\beta(P)
\end{equation}
for all facet defining inequalities $\beta$ of $\mathsf{TOLoc}$, which shows that the set of correlations compatible with the operational definition of sequential nonlocality is equal to $\mathsf{TOLoc}$.
\end{proof}

So, for this simple scenario, where $A$ performs a single measurement and $B$ a sequence of two with binary inputs and outputs for all of them, the time-ordered local models exactly capture the operational definition of locality. Correlations admitting a time-ordered local model are not only compatible with the allowed sequential operations, but having such a model is equivalent to be sequentially local in the operational sense.

This result together with the fact $\mathsf{TOLoc}\subsetneq\mathsf{PostLoc}$, shown above, implies that apart from Popescu's hidden nonlocality there is a new form of nonlocality that can be revealed by studying correlations arising in scenarios of measurement sequences.
Formally stated we have the
\begin{thm}
\label{thm:postbiggerop}
In the bipartite sequential scenario with respect to $(1,2)$ with binary inputs and outputs there exist correlations $P \in \mathsf{PostLoc}$ that are nonlocal in the operational sense, i.e. $\mathsf{OpLoc} \subsetneq \mathsf{PostLoc}$.
\end{thm}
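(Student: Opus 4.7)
The plan is extremely short because the heavy lifting has already been done in the preceding material. I would simply chain together the two key facts already established in this section: the strict inclusion $\mathsf{TOLoc}\subsetneq \mathsf{PostLoc}$ (witnessed by the facet inequality \eqref{eq:corrIneq}, which has value $2$ over $\mathsf{TOLoc}$ but attains $4$ over $\mathsf{PostLoc}$ via \eqref{eq:linprogseq}), and Theorem \ref{thm:easyscenario}, which asserts $\mathsf{OpLoc}=\mathsf{TOLoc}$ in exactly the $(1,2)$ binary-input/output scenario under consideration.

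Concretely, I would proceed in two lines. First, invoke Theorem \ref{thm:easyscenario} to replace $\mathsf{OpLoc}$ by $\mathsf{TOLoc}$ in the desired statement; the question reduces to showing $\mathsf{TOLoc}\subsetneq \mathsf{PostLoc}$. Second, exhibit the explicit separation already constructed: take $P^{\star}\in\mathsf{PostLoc}$ to be any optimiser of the linear program \eqref{eq:linprogseq}, for which $\beta(P^{\star})=4$, while $\max_{P\in\mathsf{TOLoc}}\beta(P)=2$ by the fact that \eqref{eq:corrIneq} is a facet of $\mathsf{TOLoc}$. Hence $P^{\star}\in\mathsf{PostLoc}\setminus\mathsf{TOLoc}=\mathsf{PostLoc}\setminus\mathsf{OpLoc}$, which proves $\mathsf{OpLoc}\subsetneq\mathsf{PostLoc}$.

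The proof has essentially no obstacle at this stage: all the technical work (the facet enumeration of $\mathsf{TOLoc}$ via \textsc{porta}, the linear program producing $\beta_{\mathsf{PostLoc}}=4$, and the identification $\mathsf{OpLoc}=\mathsf{TOLoc}$) was carried out earlier. If I wanted to make the statement more tangible rather than purely existential, I would additionally cite the GHZ-based quantum realisation described after \eqref{eq:linprogseq}, which yields $\beta(P)=2\sqrt{2}>2$ and thus furnishes a \emph{quantum} $P\in\mathsf{PostLoc}\setminus\mathsf{OpLoc}$; this strengthens the theorem by guaranteeing that the separating correlations are physically realisable, not merely abstract no-signalling boxes. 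In either version, the only thing to check is that the witness $P^{\star}$ indeed lies in $\mathsf{PostLoc}$, which is built into the feasibility constraints of the linear program \eqref{eq:linprogseq} and is therefore immediate from its construction.
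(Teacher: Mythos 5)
Your proposal is correct and is exactly the paper's argument: the theorem is stated as the immediate consequence of Theorem \ref{thm:easyscenario} ($\mathsf{OpLoc}=\mathsf{TOLoc}$ in this scenario) combined with the strict inclusion $\mathsf{TOLoc}\subsetneq\mathsf{PostLoc}$ witnessed by the facet inequality \eqref{eq:corrIneq} and the linear-program value $\beta_{\mathsf{PostLoc}}=4>2$ in \eqref{eq:linprogseq}. Your optional remark about the GHZ-based quantum realisation matches the paper's own discussion of physical realisability and is a fine addition, though not needed for the theorem itself.
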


\rodrigo{To clarify the interpretation of this result, let us consider an hypothetical scenario where a referee is given a device producing sequential correlations $P(a_1b_1b_2|x_1y_1y_2)$. 
The goal of the referee is to infer whether two parties $A$ and $B$ could use this device to violate a Bell inequality, which would make the device potentially useful for quantum key distribution, 
randomness generation or any other information protocol based on nonlocality. The first naive attempt of the referee would be to check if the device, without any processing, is capable of violating a 
Bell inequality between $A$ and $B$. After concluding that this is not the case and, aware of the notion of hidden-nonlocality by Popescu, he implements a protocol of postselection. That is, he discards the events 
in which the input and output of the first measurement of $B$ are different from a certain combination $b_1y_1$. After doing so, he checks whether this postselected statistics violate a Bell inequality. He finds a 
negative answer for every combination of $b_1y_1$ and concludes that there is nothing useful in this device as it is producing correlations that are classical in every sense. The implication of Thm. \ref{thm:postbiggerop} is 
simply that the referee may be mistaken in concluding so. There exist correlations that may seem useless for such referee's criteria, but that can be turned into nonlocal correlations by simply performing a local processing of 
information by $A$ and $B$.

The question that naturally arises is: what should have the referee checked to avoid any wrong conclusion? In this simple scenario of one dichotomic measurement for $A$ and two sequential 
dichotomic measurements for Bob, due to Thm. \ref{thm:easyscenario}, we know that he should have checked whether $P$ had a time-ordered local model. 
If the correlations have such a model, then they are useless for such purposes. If the correlations do not have such a decomposition, then he can be sure that some protocol allows for a 
Bell inequality violation between $A$ and $B$. Whether this last implication holds in a general scenario is an open question that we enunciate, among others, in the next section.}


\section{Open questions}

The previous section, in particular theorem \ref{thm:postbiggerop}, shows that the study of nonlocality in sequential correlation scenarios does not reduce to the study of standard Bell nonlocality; on the contrary, inequivalent forms of nonlocality must be distinguished in these scenarios. However, a full characterisation of all forms of nonlocality is still elusive. In the following we will formulate and discuss several interesting questions that remain open.

One of the most interesting open questions with respect to sequential nonlocality concerns the relationship between the set $\mathsf{TOLoc}$ and the set of correlations that are sequentially local in the operational sense. We know that having a time-ordered local model implies being local in the operational sense, the converse, however, remains an open problem in the case of more general scenarios.

\begin{prob}
\label{prob:TOLoc}
In a general sequential correlation scenario, $\mathsf{TOLoc} \subsetneq \mathsf{OpLoc}$ or $\mathsf{TOLoc} = \mathsf{OpLoc}$?
\end{prob}

Suppose $\mathsf{TOLoc} = \mathsf{OpLoc}$. Then, for any sequential correlation scenario, the complicated set of operationally local correlations can be characterised by the facet inequalities corresponding to the set $\mathsf{TOLoc}$ and all types of nonlocality for this scenario can be detected by these inequalities. If, however, $\mathsf{TOLoc} \subsetneq \mathsf{OpLoc}$, then, for some scenario, there are sequential correlations that remain local under all protocols involving wirings and post-selection while lacking a time-ordered local model.

Another relevant open problem is related to the nonlocality
displayed by quantum states. Does this new form of nonlocality
open the possibility to detect more quantum states as nonlocal
than would be possible with standard Bell tests or using Popescu's
argument of hidden nonlocality? Due to the result of
\cite{Popescu1995} we know that there are quantum states with a
local hidden-variable model for all projective measurements that
display hidden nonlocality; furthermore \cite{Hirsch2013} provides
a class of entangled states that show hidden nonlocality while
having a local model for general measurements (POVMs). But are
there quantum states that do not display hidden nonlocality in a
given sequential scenario but nevertheless give rise to
correlations that do not have a time-ordered local model? This question
was also raised in \cite{Teufel1997}. If so, this would correspond
to a new form of nonlocality exhibited by quantum states going
beyond both standard and hidden nonlocality.

\begin{prob}
\label{prob:quantum}
Is there a quantum state $\varrho$ acting on a product Hilbert space $\mathfrak{H}_1 \otimes \mathfrak{H}_2$ such that the following holds?
\begin{itemize}
\item[(i)] All correlations obtained from single projective measurements on $\varrho$ are local.
\item[(ii)] All sequential correlations $P$ obtained by measurements on $\varrho$ do not display hidden nonlocality.
\item[(iii)] For one choice of quantum measurements the sequential correlations $P$ do not admit a time-ordered local model.
\end{itemize}
\end{prob}

Note that this problem is connected to the open question whether
generalised measurements in form of POVMs offer an advantage over
projective measurements for detecting standard nonlocality of
quantum states and also to problem \ref{prob:TOLoc}.

\begin{prop}
Assume $\mathsf{TOLoc} = \mathsf{OpLoc}$ and further that every quantum state $\varrho$ that has a standard local model for projective measurements also has such a model for measurements given by POVMs. Then the answer to problem \ref{prob:quantum} is negative.
\end{prop}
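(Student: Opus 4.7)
The plan is to argue by contradiction: suppose a state $\varrho$ on $\mathfrak H_1\otimes\mathfrak H_2$ satisfies (i)--(iii). By (i), $\varrho$ admits a standard local hidden-variable model for all projective measurements, and then, by the second hypothesis of the proposition, also for all POVMs. Explicitly, there exist $(\Lambda,\{p_\lambda\})$ and, for every pair of POVMs $(\{M_a\},\{N_b\})$ on $\mathfrak H_1$ and $\mathfrak H_2$, response functions $P_A^\lambda(a|M)$ and $P_B^\lambda(b|N)$ with
\begin{equation}
\trace[\varrho(M_a\otimes N_b)]=\sum_\lambda p_\lambda\, P_A^\lambda(a|M)\, P_B^\lambda(b|N).
\end{equation}
I shall use this POVM local model to show that the correlations $P$ in (iii) must themselves lie in $\mathsf{TOLoc}$, directly contradicting (iii).

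I next recast the sequential scheme generating $P$ as a single POVM on each side. For any input string $\vec x=(x_1,\ldots,x_s)$ on Alice's side, let $\mathcal I^{(i)}_{a_i}$ denote the instrument implementing her $i$-th measurement; the Heisenberg composition $M^{\vec x}_{\vec a}=(\mathcal I^{(1)\ast}_{a_1}\circ\cdots\circ\mathcal I^{(s)\ast}_{a_s})[\identity]$ is a POVM on $\mathfrak H_1$ with outcomes $\vec a=(a_1,\ldots,a_s)$, and an analogous POVM $N^{\vec y}_{\vec b}$ is obtained for Bob. Substituting into the local POVM model yields
\begin{equation}
P(\vec a\vec b|\vec x\vec y)=\sum_\lambda p_\lambda\, P_A^\lambda(\vec a|M^{\vec x})\, P_B^\lambda(\vec b|N^{\vec y}).
\end{equation}
To conclude $P\in\mathsf{TOLoc}$ I still need to verify the sequential conditions \eqref{eq:causalmodela}--\eqref{eq:causalmodelb} for the response functions, i.e.\ that $\sum_{a_{j+1},\ldots,a_s}P_A^\lambda(\vec a|M^{\vec x})$ depends only on $(x_1,\ldots,x_j)$. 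This rests on two facts: the marginal POVM $\sum_{a_{j+1},\ldots,a_s}M^{\vec x}_{\vec a}$ depends only on $(x_1,\ldots,x_j)$, because the un-recorded later instruments reduce to the identity channel; and the response functions in any standard construction of a local POVM model respect coarse-graining of POVMs, so the sum over outcomes of the response function equals the response function of the coarse-grained POVM. The analogous reasoning on Bob's side closes the argument.

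The main technical subtlety, which I expect to be the principal obstacle, is the coarse-graining consistency of the POVM response functions: it is automatic in every explicit construction of a local POVM model (Werner's single-state response, Barrett's model for general measurements, the model of Hirsch et al.) but is not forced by the bare local-decomposition definition. I would handle it either by restricting attention to models built from a deterministic single-hidden-strategy response assigning outcomes to each POVM element, or by a refinement argument enlarging the hidden-variable space $\Lambda$ so as to record each POVM's outcome structure. The first assumption $\mathsf{TOLoc}=\mathsf{OpLoc}$ does not explicitly enter this line of reasoning; it is consistent with the conclusion and locates the negative answer inside the operational picture, but is not required to derive it.
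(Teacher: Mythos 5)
There is a genuine gap, and it is precisely the step you flag as a ``technical subtlety'': the claim that the response functions of a local POVM model can be arranged to respect coarse-graining, so that the effective single-shot POVMs $M^{\vec x}$ inherit the time-ordering conditions \eqref{eq:causalmodela}--\eqref{eq:causalmodelb} at the level of each $\lambda$. The bare existence of a local model for POVMs constrains nothing about how the responses to the \emph{distinct} POVMs $M^{(x_1,x_2,\ldots)}$ and $M^{(x_1,x_2',\ldots)}$ relate: only the $\lambda$-averaged statistics must reproduce the (coinciding) quantum marginals, whereas $\mathsf{TOLoc}$ demands equality of the marginals $\lambda$-by-$\lambda$, and that is exactly the extra structure whose existence is at stake. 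Worse, the lemma you are trying to establish --- a local model for POVMs implies that all sequential correlations of the state admit a time-ordered local model --- is false in general: the two-qubit states of \cite{Hirsch2013}, cited in the paper, have a local model for all POVMs in single-measurement rounds yet violate a Bell inequality after local filtering; their sequential correlations therefore display hidden nonlocality and, by proposition \ref{prop:nohidden}, lie outside $\mathsf{TOLoc}$ (indeed outside $\mathsf{PostLoc}$). So neither of your proposed repairs (deterministic single-strategy responses, refining $\Lambda$) can succeed in general. Your closing remark that the assumption $\mathsf{TOLoc}=\mathsf{OpLoc}$ ``is not required'' should have been a warning sign: your argument also never uses condition (ii), so if it were sound it would settle problem \ref{prob:quantum} negatively from the POVM hypothesis alone, which the paper explicitly leaves open.

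The paper's proof is structured differently and uses both hypotheses and all three conditions. Assume (ii) and (iii); the equality $\mathsf{TOLoc}=\mathsf{OpLoc}$ converts (iii) into operational nonlocality, i.e.\ some valid protocol maps the sequential correlations to nonlocal bipartite correlations $P'$; condition (ii) excludes post-selection as the source of this nonlocality, leaving sequential wirings; wiring a sequence of projective measurements defines an effective POVM on each side, so $P'$ has a quantum realisation on $\varrho$ with POVMs; finally, (i) together with the POVM-no-advantage hypothesis would force $P'$ to be local, a contradiction, so (i) must fail. Your correct observation that composed sequences of measurements yield effective POVMs is the one ingredient shared with the paper, but it must be applied to the wired bipartite correlations whose nonlocality is guaranteed by $\mathsf{TOLoc}=\mathsf{OpLoc}$, not used to manufacture a time-ordered local model directly from the POVM model.
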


\begin{proof}
We want to show that under the given assumptions the conditions (i),(ii), and (iii) of problem \ref{prob:quantum} cannot be all satisfied. We assume (ii) and (iii) and show a contradiction with (i). Assuming $\mathsf{TOLoc} = \mathsf{OpLoc}$ implies that (iii) is equivalent to the existence of sequential correlations obtained from $\varrho$ that are sequentially nonlocal in the operational sense. Assuming (ii) excludes the possibility of hidden nonlocality, therefore only leaving the possibility that there are sequential wirings taking the correlations $P$ to some bipartite nonlocal correlations $P'$. Applying these wirings on the sequential correlations obtained from projective measurements on $\varrho$ defines a quantum realisation of the nonlocal correlations $P'$ with POVMs for both parties. Now, the assumption that POVMs do not offer any advantage over projections implies a contradiction with (i).
\end{proof}

Popescu already mentioned that his argument using projective measurements to reveal hidden nonlocality does not apply to the case of local dimension $d=2$ \cite{Popescu1995}. The states found in \cite{Gisin1996} in dimension $d=2$ do display hidden nonlocality when sequences of generalised measurements in form of POVMs are applied, however, these states do not have a standard local model for all measurements, but are only local in the sense that they do not violate the CHSH inequality for rounds of single measurements.

The authors of \cite{Teufel1997} further presented states in dimension $d\geq 3$ that fulfil conditions (i) and (ii) of \ref{prob:quantum}, but they were not able to conclude whether (iii) holds. Based on these findings and the conjecture that entanglement of a quantum state is equivalent to not having a time-ordered local model they also proposed a scheme for the classification of nonlocality. According to this scheme the nonlocality of a quantum state is characterised by two natural numbers $\langle N,n \rangle$, the \emph{indices of nonlocality.} The first index $N$ denotes the length of the sequences of measurements necessary to reveal the nonlocality, i.e. the smallest number such that the quantum state gives rise to correlations that do not have a time-ordered local model with respect to $(N,N)$. For instance, pure entangled states have $N=1$ and Werner states in dimension $d\geq 5$ have $N=2$; separable states have $N=\infty$. If $N< \infty$, then the second index $n$ denotes the smallest value of $N$ that can be attained by non-trivial measurements. For the case of Werner states in dimension $d\geq 5$ we have $n=1$, as the post-measurement state has $N=1$.
For states with $N = \infty$ the second index is defined as the minimal number of copies of the state needed to reveal its nonlocality.


So far, all the problems that we have tackled in this paper are concerned with locality, classicality and the different definitions and relations that emerge in a scenario with sequential measurements. But similar questions can be posed by substituting locality or classicality by quantumness. Let us define analogous versions of $\mathsf{TOLoc}$ and $\mathsf{OpLoc}$ for the quantum case.

\begin{dfn}[Time-ordered quantum]
Sequential correlations $P$ belong to the set $\mathsf{TOQuant}$ of time-ordered quantum correlations, if there are
\begin{itemize}
\item[(i)] a quantum state $\varrho$ on some product Hilbert space  $\mathfrak{H}_1 \otimes \mathfrak{H}_2$
\item[(ii)]  measurements on $\mathfrak{H}_1$ and $\mathfrak{H}_2$ given by the Kraus operators  $\lbrace K_{a_1}^{x_1}\rbrace_{a_1}$ and $\lbrace L_{b_1}^{y_1}\rbrace_{b_1}$ respectively
\item[(iii)] and projective measurements  $\lbrace M_{a_2}^{x_2}\rbrace_{a_2}$ and $\lbrace N_{b_2}^{y_2}\rbrace_{b_2}$ on $\mathfrak{H}_1$ and $\mathfrak{H}_2$
\end{itemize}
such that the correlations $P$ can be expressed as
\begin{multline}
\label{eq:seqqu}
P(a_1a_2b_1 b_2 |x_1 x_2 y_1 y_2)\\= \trace\left((K_{a_1}^{x_1})^\dagger M_{a_2}^{x_2}K_{a_1}^{x_1}\otimes (L_{b_1}^{y_1})^\dagger N_{b_2}^{y_2}L_{b_1}^{y_1} \varrho\right).
\end{multline}
\end{dfn}

Just as before one can consider the correlations resulting from some protocol and ask whether there is a quantum realisation for this final bipartite distribution, i.e. whether
\begin{equation}
\label{eq:quantumreal}
P'(ab|xy)= \trace\left(M_{a}^{x}\otimes N_{b}^{y} \varrho\right)
\end{equation}
for some quantum state $\varrho$ and measurements $M_a^x$ and $N_b^y$. Thus, analogous to definition \ref{dfn:sequentialLocality} one can define the set of operationally quantum correlations.

\begin{dfn}
The set of \emph{operationally quantum correlations} $\mathsf{OpQuant}$ is the set of sequential correlations $P$ such that for any $n$ the product correlations $P^{\times n}$ are mapped by any valid protocol to bipartite correlations that admit a quantum realisation.
\end{dfn}

By a reasoning similar to the one used in proposition \ref{thm:modelscomp} it is clear that we have the inclusion $\mathsf{TOQuant} \subset \mathsf{OpQuant}$. As a quantum version of open problem \ref{prob:TOLoc} we can then pose the following

\begin{prob}
Are there correlations that are operationally quantum but do not belong to the set of time-ordered quantum correlations? That is, more formally, do we have
$\mathsf{TOQuant} \subsetneq \mathsf{OpQuant}$ or $\mathsf{TOQuant} = \mathsf{OpQuant}$?
\end{prob}

Let us discuss the two possibilities separately. Consider that $\mathsf{TOQuant}\subsetneq\mathsf{OpQuant}$. In this case, there are probability distributions that (i) do not have a decomposition of the form \eqref{eq:seqqu} and (ii) result after any valid protocol in correlations with a quantum realisation of the form \eqref{eq:quantumreal}.
Clearly, (i) implies that one cannot obtain these  correlations by performing measurements on quantum states. However, all the correlations (in the sense of a probability distribution between distant observers, without any temporal order between measurements) that can be generated out of them, are quantum. This implies that if one attempts to characterise the observable statistics valid within quantum theory, it will not suffice to characterise the correlations between distant observers, but also scenarios with sequences of measurements need to be considered. This would suggest that attempts to define quantum correlations via information principles might leave a non-trivial part of quantum theory aside if they do not consider sequential measurements.
On the other hand, if $\mathsf{TOQuant}=\mathsf{OpQuant}$ a converse reasoning applies. It would be striking that both sets are equivalent, since the constraints to define $\mathsf{TOQuant}$ appear to be stronger that the ones of $\mathsf{OpQuant}$. Note that the decomposition  \eqref{eq:seqqu} demands a well-defined post-measurement state, whereas the decomposition \eqref{eq:quantumreal} only requires the validity of the Born rule. Therefore, the equality of the two sets would support the idea that the whole set of observable statistics according to quantum theory only depends on the state space and measurements together with the Born rule, rather than the transformations of states after measurements.

\section{Conclusions}
To summarise, we have studied nonlocality in scenarios where the parties are allowed to perform sequences of measurements. As we have seen, sequential correlations give rise to inequivalent notions of nonlocality that we summarise in the following.

\begin{enumerate}
\item $\mathsf{PostLoc}$: This is the set of probability distributions that are local in the standard bipartite sense studied by Bell and that do not show any  hidden nonlocality, i.e. the probability distributions after post-selection on the previous measurements are local.
\item $\mathsf{OpLoc}$: This is the  set of sequential correlations such that any valid protocol takes an arbitrary number of copies of these correlations to standard bipartite local correlations. These protocols process classical information locally and correspond to the allowed operations in a locality scenario, i.e. nonlocality is the resource that cannot be created using these operations.
\item $\mathsf{TOLoc}$ This is the set of probability distributions that have a time-ordered local model, i.e. a local hidden-variable model that respects the causal structure of the sequential correlation scenario.
\end{enumerate}

We have studied the different relations between these sets and our findings can be summarised as

\begin{equation}\label{eq:conclusion}
\mathsf{TOLoc}\subseteq \mathsf{OpLoc}
\subsetneq\mathsf{PostLoc},
\end{equation}
where we could show the equality $\mathsf{TOLoc}= \mathsf{OpLoc}$ in a simple scenario. We further stated and discussed several interesting open problems, among which the most important one concerns the  question whether $\mathsf{TOLoc}= \mathsf{OpLoc} $ or $\mathsf{TOLoc}\subsetneq \mathsf{OpLoc}$ in the general case. Furthermore, we presented a quantum version of the different sets leading to analogous questions concerning the quantumness of correlations.

\section{Acknowledgements}

We thank T. Vert\'esi and Ll. Masanes for insightful discussions. This work is supported by the ERC grant TAQ, the ERC Starting Grant PERCENT, the
Spanish projects FIS2010-14830 and CHIST-ERA DIQIP, the
Generalitat de Catalunya, and the Excellence Initiative of the German Federal and State Governments (Grant ZUK 43). MN acknowledges support from the John Templeton Foundation.

%
\appendix
\section{Inequalities of $\mathsf{TOLoc}$ for a simple scenario}
\label{appendix}

Using standard techniques of polytope analysis one can easily obtain all the facets of the set $\mathsf{TOLoc}$. This is however a computationally costly task. We have been able to obtain all the facets only for the scenario of two parties, $A$ and $B$, where $A$ performs chooses one binary measurement out of two at each round; and $B$ performs two sequential binary measurements choosing also at each time between two different measurements. The experiment is described by the probability distribution $P(a_1b_1b_2|x_1y_1y_2)$.

The facets can be divided into three groups. First, inequalities involving only the marginal $P(a_1b_1|x_1y_1)$ or $P(a_1b_2|x_1y_1y_2)$. The ones of the former type are equivalent (up to symmetries of permutation of inputs, outputs and parties) to the well-known CHSH inequality \cite{Clauser1969},

\begin{multline}
\label{eq:chsh}
P(a_1=b_1|00)+P(a_1=b_1|01)\\+P(a_1=b_1|10)+P(a_1\neq b_1|11)\leq 3.
\end{multline}
The inequalities involving $P(a_1b_2|x_1y_1y_2)$ are also equivalent to the CHSH. Now $B$ can choose among four different inputs given by $(y_1,y_2)$, these inequalities correspond to a \emph{lifting} \cite{Pironio2005} of the CHSH inequalities and are given by

\begin{multline}
\label{eq:liftedchsh}
P(a_1=b_2|000)+P(a_1=b_2|011)\\+P(a_1=b_2|100)+P(a_1\neq b_1|111)\leq 3.
\end{multline}
and its symmetries.

Secondly, there are facets involving also conditional probability distributions of the kind $P(a_1b_2|x_1y_2b_1y_1)$. These facets are again equivalent to the CHSH inequality, but now conditioned on a certain input and output $(y_1,b_1)$. For example
\begin{multline}
\label{eq:conditionedchsh}
P(a_1=b_1|00,b_1y_1)+P(a_1=b_1|01,b_1y_1)\\+P(a_1=b_1|10,b_1y_1)+P(a_1\neq b_1|11,b_1y_1)\leq 3
\end{multline}
for all possible combinations of $(b_1,y_1)$ and also all symmetries.

Lastly, there is a third kind of facets that involve the whole probability distribution $P(a_1b_1b_2|x_1y_1y_2)$. Note that the inequalities of the first kind are essentially conditions ensuring standard bipartite locality \eqref{eq:HiddenNonlocality1}, those of the second kind ensuring that there is no hidden nonlocality. The inequalities of the third group, however, are related to a different notion of nonlocality that arises in the sequential scenario. A representative of this third class of inequalities is given by \eqref{eq:corrIneq}.

\end{document}